\title{Relating Knowledge and Coordinated Action:\\
The Knowledge of Preconditions Principle}
\author{Yoram Moses\footnote{The Israel Pollak academic chair at Technion. This work was supported in part by ISF grant 1520/11.}
\institute{Technion---Israel Institute of Technology}
\email{moses@ee.technion.ac.il}}
\newcommand{\eqdef}{\triangleq}
\newcommand{\recgo}{\psi_{\mathsf{go}}}
\newcommand{\fire}{\mathsf{fire}}
\newcommand{\go}{\mathsf{go}}
\newcommand{\KoP}{\mbox{{\bf K}$\!\!\!\!\;\;${\it o}{\bf P}}\/}
\newcommand{\hqed}{\hfill\ensuremath{\square}}
\newcommand{\sfa}{\alpha}
\newcommand{\Agents}{{\mathbf{A}}}
\newcommand{\Proc}{{\mathbb{P}}}
\newcommand{\Lang}{\ensuremath{{\cal L}}}
\newcommand{\LKn}{\Lang_n^K}
\newcommand{\sat}{\models}
\newcommand{\imp}{\Rightarrow}
\newcommand{\Max}{\mathsf{Max}}
\newcommand{\CtMb}{Computing the Max}
\newcommand{\CtM}{\mbox{{\sc CtM}}}
\newcommand{\defemph}[1]{\textbf{\textit{#1}}}
\newcommand{\bm}[1]{\mathbf{#1}}
\newtheorem{theorem}{Theorem}[section]
\newtheorem{corollary}[theorem]{Corollary}
\newtheorem{definition}[theorem]{Definition}
\newtheorem{claim}{Claim}
\newtheorem{example}{Example}
\newtheorem{observation}{Observation}
\newcommand{\Nat}{{\mathbb{N}}}
\newcommand{\True}{T}
\newcommand{\False}{F}
\newcommand{\Pts}{\mathsf{Pts}}
\newcommand{\ddo}{\mathtt{does}}
\newcommand{\dDid}{\mathtt{did}}
\newcommand{\rlsim}[1]{\thickapprox_{#1}}
\newcommand{\G}{{\cal G}}
\renewcommand{\bm}[1]{#1}%
\begin{document}
\maketitle 
\begin{abstract}
The \defemph{Knowledge of Preconditions} principle (\KoP) is proposed as a 
widely applicable connection between knowledge and action in multi-agent systems. Roughly speaking, it asserts that if some condition~$\varphi$ is a necessary condition for performing a given action $\sfa$, then {\em knowing}~$\varphi$ is also a necessary condition for performing~$\sfa$. 
Since the specifications of tasks often involve necessary conditions for actions, the \KoP\ principle shows that such specifications induce knowledge preconditions for the actions. Distributed protocols or multi-agent plans that satisfy the specifications  must ensure that this knowledge be attained, and that it is  detected by the agents as a condition for action. 
The knowledge of preconditions principle is formalised in the runs and systems framework, and is proven to hold in a wide class of settings. 
Well-known connections between knowledge and coordinated action are extended and shown to derive directly from the \KoP\ principle: a {\em common knowledge of preconditions} principle is established showing that common knowledge is a necessary condition for performing simultaneous actions, and a  {\em nested knowledge of preconditions} principle is proven, showing  that coordinating actions to be  performed in linear temporal order requires a corresponding \mbox{form of nested knowledge}. 
\end{abstract}

{\bf Keywords:~}{Knowledge, multi-agent systems, common knowledge, nested knowledge, coordinated action, knowledge of preconditions principle.}

\section{Introduction}
While epistemology, the study of knowledge, has been a topic of interest in philosophical circles for centuries and perhaps even millennia, in the last half century it has seen a flurry of activity and applications in other fields such as AI \cite{MH}, game theory \cite{Au} and distributed computing \cite{HM1}.  
 At least in the latter two fields a particular, information-based, notion of knowledge plays a prominent and useful role. 
 
This paper proposes an essential connection between knowledge and action in such a setting. 
Using $\ddo_i(\alpha)$ to denote ``{\it Agent~$i$ is performing action~$\alpha$}'' and $K_i\varphi$ to denote that Agent~$i$ knows the fact~$\varphi$, the connection can  intuitively be formulated as follows: 
\vspace{4pt}
\begin{mdframed}[roundcorner=4pt]
 \begin{center}
\underline{The  \defemph{\sc{Knowledge of Preconditions}} Principle (\KoP):}
\end{center}
 \vspace{3.5mm}
\begin{tabular}{lrl}
$~$\hspace{2.7cm}If  & {$\bm\varphi$} & is a necessary condition for~~$\ddo_i(\sfa)$ \\[.91ex]
$~$\hspace{2.7cm}then & ${\bm{K_i\varphi}}$ & 
is a necessary condition for~~$\ddo_i(\sfa)$
\end{tabular}
 \vspace{1.4mm}
\end{mdframed}

This statement appears deceptively simple. 
In fact, many successful applications of knowledge to the design and analysis of distributed protocols over the last three decades are rooted in the \KoP. Moreover,  some of the deeper insights obtained by knowledge theory in this field can be derived in a fairly direct fashion from the \KoP. We will argue and demonstrate  that this principle lies at the heart of coordination in many distributed and multi-agent systems. 

This paper is structured as follows. Section~\ref{sec:case} illustrates the central role of knowledge in a natural distributed systems application. Section~\ref{sec:kop} provides a high-level discussion of  the knowledge of preconditions principle and its connection to coordinating actions. In \cref{sec:runs} we review and discuss the modelling of knowledge in the runs and systems model of distributed systems based on~\cite{FHMV}.  A formal statement and proof of the \KoP\ are presented in \cref{sec:formal-kop}. Then, in \cref{sec:ck}, the \KoP\ is used to establish a {\em common knowledge of preconditions} principle. It states that in order to perform simultaneously coordinated actions, agents must first attain common knowledge of any of the actions' preconditions.  An example of its use is provided in \cref{sec:firing}. \cref{sec:nested} present an additional use of the \KoP, and shows that coordinating a sequence of actions to occur in a prescribed temporal order requires attaining nested knowledge of their preconditions. 
Finally, \cref{sec:discussion} discusses additional applications, extensions  and future directions. 


\subsection{The Case for Knowledge in Distributed Systems}
\label{sec:case}
Why should  knowledge
play a central role in distributed computing? 
As pointed out in~\cite{HM1}, most everyone who designs or even just tries to study the workings of a distributed protocol is quickly found talking in terms of knowledge, making statements such as {\em ``once the process receives an acknowledgement, it knows that the other process is ready\ldots''}.
An essential aspect of distributed systems is the fact that an agent chooses
 which action to perform based on the  local information available to it, which typically provides only  a partial view of the overall state of the system. 
To get a sense of the role of knowledge in distributed systems, consider the following example. 

\begin{example}{$~\!\!\!$} 
\label{ex:Max}
\begin{figure}
\label{fig:1.1}
\centering
\includegraphics[width=4.4in,height=2.2in]{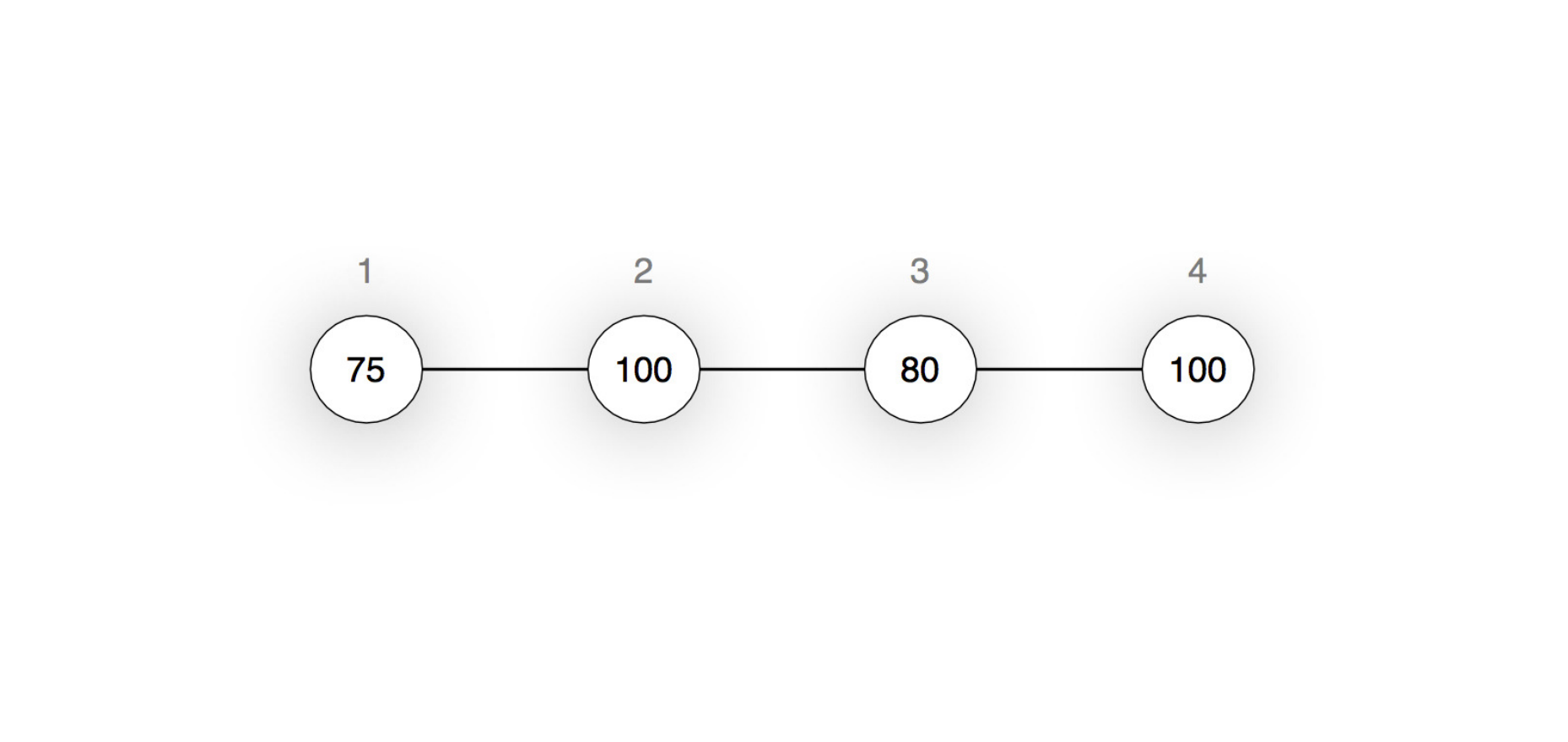}
\vspace{-12mm}
\caption{A simple four-agent system}
\end{figure}
\noindent
Given is a distributed network modeled 
by a graph, with agents  located at the nodes, and the edges standing for 
communication channels (see Figure~\ref{fig:1.1}). In the problem we shall call \defemph{\CtMb} (or $\CtM$ for short), each agent~$i$ starts out with a natural number {$v_i\in\Nat$} as an initial value. The goal is to have Agent~1 print 
the maximum of all of the initial values (we denote this value by~$\Max$), and print nothing else. 
In the instance depicted in Figure~\ref{fig:1.1}, the maximal value happens to be~100. 
Initially, Agent~1 clearly can't print its own initial value of 75. 
Suppose that  Agent~1 receives a message~$\mu\eqdef\,\,$``$v_2=100$''  from Agent~2 reporting that its value is~$100$. 
At this point, Agent~1 has access to the maximum, and printing~100 would satisfy the problem specification. Compare this with a setting that is the same in all respects, except that Agent~3's value is $v_3=150$. 
In this case, of course, $\Max\ne 100$ and so printing~100 is forbidden. 
But if Agent~1 can receive the same message~$\mu$ under similar circumstances in both scenarios, 
then it is unable to distinguish whether or not $\Max=100$ upon receiving~$\mu$. 
Intuitively, even in the first scenario, the agent does not {\em know} that $\Max=100$.

What information does Agent 1 need, then, in order to be able to print the maximum? 
Notice that it is not necessary, in general,  to collect {\em all} of the initial values in order to print the maximum. 
For example, suppose that the agents follow a bottom-up protocol in which 
values are sent from right to left, starting from Agent~4, and every agent passes to the left the larger of its own value and the value it received from its neighbor on the right (if such a neighbor exists). 
In this protocol, Agent~1 can clearly print the maximum after receiving 
the message {\it ``$v_2=100$''}, and seeing just one value besides its own. 
Interestingly, even collecting all of the values is  {\em not  a sufficient condition} for printing the~$\Max$. Imagine a setting in which the network is as in Figure~\ref{fig:1.1}, but Agent~1 considers it possible that there are more than four nodes in the network. In this case, even if Agent~1 receives all (four) values, it may still need to wait for proof that there is no additional, larger, value in the system. 
\hqed
\end{example}

\CtM\ is a simplified example in the spirit of many distributed systems applications. In fact, a central problem called {\it Leader Election}, for example, is often solved by computing a node with maximal ID \cite{AWbook,lelann}.
The solution to such a problem is typically in the form of a set of short computer programs (jointly constituting a {\em distributed protocol}), each executed at one of the nodes. 
When the nodes follow such a protocol, the resulting execution should satisfy the problem specification.  Of course, the programs are written in a standard programming language, without any reference to knowledge or possibility. 
In the vast majority of cases, the programs in question do not enumerate and/or explore possible states or scenarios. Indeed, the program designer is typically unfamiliar with formal notions of knowledge. 
This being the case, what sense does it make to talk of Agent~1 in Example~\ref{ex:Max} ``knowing'' or ``not knowing'' that $\Max=c$? Can it make sense to say that the Agent ``considers it possible that there may be more than four nodes in the system''? After all, we may be talking about a 10-line program. It has no soul. Does it have thoughts, doubts and mental states?

Since agents act based on their local information, a protocol designer must ensure that agents obtain the necessary information for a given task, and that this information is applied correctly. 
Using the information-based notion of knowledge, the designer can ascribe knowledge to an agent without requiring it to have a soul, feelings, and self-awareness. 
As seen in the \CtM\ example, it is natural to think  in terms of whether or not Agent~1 knows $\Max=c$ at any given point in a run of a protocol solving~\CtM. (A formal definition of knowledge will be provided in Section~\ref{sec:runs}.)
Suppose that a protocol is designed to solve  \CtM\ in networks that may have a variety of sizes. If Agent~1 does not start out with local information ensuring that there are no more than four nodes in the system, then from the point of view of an outside observer the agent can be thought of as  ``considering it possible'' that there may be more than four nodes.

Even in a simple network as in Figure~\ref{fig:1.1}, the \CtM\ problem can be posed in different models, which can differ in essential aspects. 
A solution to $\CtM$ in one model might not solve the problem in another model. Indeed,  the  rationale behind distinct solutions, as well as their details, may vary considerably. 
Are there common features shared by all solutions to \CtM?

Interestingly, all solutions to $\CtM$, in all models, share one property: Agent~1 must \defemph{know} that $\bm{\Max=c}\,$ in order to print the value~$c$. 
Indeed, the ability to print the answer in a protocol for~$\CtM$ reduces to detecting when the $\Max$ value is known. 
Of course, once Agent~1 knows that $\Max=c$ it can safely print~$c$. Hence, knowing that $\Max=c$ is not just necessary, but also a sufficient condition for printing~$c$.
The \CtM\ problem shows that knowledge and attaining knowledge can be a central and crucial aspect of a standard distributed application. 

The need to know $\Max=c$ in solving $\CtM$ suggests that we consider a  natural question: {\em When does Agent~1 know that $\Max=c$?} The answer is less straightforward than we might initially expect. What is known depends in a crucial way on the protocol that the agents are following. 
Thus, in the setting of \cref{fig:1.1}, if the agents follow the bottom-up protocol, then Agent~1 knows the maximum once  it receives a single message from Agent~2.
Knowledge is also significantly affected by features of the model. In $\CtM$, if there is an upper bound (say 100) on the possible initial values, then an agent that sees this value knows the maximum. Knowledge about the network topology and properties of communication play a role as well. 
For example, consider a model in which Agent~1 has a clock, and a single clock cycle suffices for a message to be delivered, digested, and acted upon. 
Suppose that the protocol is such that all agents start simultaneously at time~0 and 
 an agent forwards a value towards Agent~1 only if this value is larger than any value 
 it has previously sent. Then in the network of Figure~\ref{fig:1.1} Agent~1 will receive a message with value $100$ from Agent~2 at time~1, and no further messages. If Agent~1 knows that the diameter of the network is~3, it will not know the maximum upon receiving this message. However, without receiving any further messages,  at time~3 Agent~1 {\em will} know that the maximum is 100; no larger value can be lurking in the system.

\subsection{K{o}P and Coordination} 
\label{sec:kop}
The fact that  $\Max=c$ is a necessary condition for printing~$c$ is an essential feature of the \CtM\ problem. We have argued that, in fact, $K_1(\Max=c)$ is also a necessary condition for printing~$c$, as the \KoP\ would suggest. But this is just one instance. Let us briefly consider another example. 
\begin{example}
Consider a bank whose ATMs are designed in such a way that an ATM will dispense cash only to a customer whose account shows a sufficiently large positive balance.
 Along comes Alice, who has a large positive balance,  and tries to obtain a modest sum from the ATM. On this day, however, the ATM is unable to communicate with the rest of the bank and it declines to pay Alice. Thus, despite the fact that Alice has good credit, the ATM frustrates her and denies her request. Apparently, given its specification, the ATM has no choice. Intuitively, in order to satisfy the credit restriction, the ATM needs to {\em know} that a customer has good credit before dispensing cash. If the ATM may pay a customer that is not known to have good credit, there will be possible scenarios in which the ATM will violate its specification, and pay a customer that does not have credit. Notice, however, that the specification said nothing about the ATM's knowledge. It only imposed a restriction on the ATM's action, based on the state of Alice's account. 
 \hqed
\end{example}
Both the \CtM\ problem and the ATM example are instances in which the \KoP\ clearly applies. 
The intuitive argument for why the \KoP\ should apply very broadly is straightforward. 
If $\varphi$ is a necessary condition for performing~$\sfa$, and agent~$i$ ever performs~$\sfa$ without knowing~$\varphi$, then there should be a possible scenario that is indistinguishable to agent~$i$, in which~$\varphi$ does not hold. Since the two scenarios are indistinguishable, the agent can perform~$\sfa$ in the second scenario, and violate the requirement that~$\varphi$ is a necessary condition. A formal statement and proof  requires a definition of necessary conditions, knowledge, as well as capturing a sense in which an action at one point implies the same action at any other, indistinguishable,  point. This will be done in \cref{sec:formal-kop}. 

Most tasks in distributed systems are described by way of a specification. Such specifications typically impose a variety of necessary conditions for actions. The \KoP\ implies that even though such specifications often do not explicitly discuss  the agents' knowledge, they do in fact impose knowledge preconditions. Observe that the \KoP\ applies to a task regardless of the means that are used to implement it. Any engineer implementing a particular task will have to ensure that preconditions are known when actions are taken. This is true whether or not the engineer reasons 
explicitly in terms of knowledge, and it is true even if the engineer is not even aware of the knowledge terminology. (Normally, neither may be the case, of course.) The need to satisfy the \KoP\ suggests that the design of distributed implementations must involve at least two steps. One is to make sure that the required knowledge is made available to an agent who needs to performed a prescribed action, and the other is ensuring that the agent detect that it knows the required preconditions. This is quite different from common practice in engineering distributed implementations~\cite{schneider1990}. 

We remark that the \KoP\ can be expected to hold in a variety of multi-agent settings well beyond the realm of distributed systems. Thus, for example, suppose that a jellyfish is naturally designed so that it will never sting its own flesh. By the \KoP, the cell activating the sting at a given point needs to {\em know} that it is not stinging the jellyfish's body when it ``fires'' its sting. The jellyfish is thus designed with some form of a ``friend or foe'' mechanism that is used in the course of activating the sting. 
Various biological activities can similarly be considered in light of the \KoP: How does the organism know that certain preconditions are met? Our last example will come from the social science arena.  Suppose that a society designs a legal system, that is required to satisfy the constraint that only people who are guilty of a particular crime are ever put in jail for committing this crime. By the \KoP, the judge (or jury) must {\em know} that the person committed this crime in order to send him to jail.

As discussed above, specifications impose preconditions. Typically, 
these conditions relate an action to facts about the world (e.g., the maximal value,  or the customer's good credit). In many cases, however, actions of different agents need to be coordinated.  
Consider a variant of \CtM\ in which in addition to Agent~1 printing the maximum, Agent~4 needs to perform an action (say print the same value or print the minimal value), but not before Agent~1 does. Then Agent~1 performing her action is a condition for~4's action. In particular, Agent~4 would need to know that Agent~1 has already come to know $\Max=c$ for some~$c$ before~4 acts. In some cases, the identity of actions performed needs to be coordinated. 

For a final example, suppose that Alice should perform an action~$\sfa_A$ only if~Bob performs an action~$\sfa_B$ at least~5 time steps earlier. Then she needs to know that Bob acted at least~5 steps before when she acts. Indeed, if~$\psi$ is a necessary condition for~$\sfa_B$, then Alice must know that 
``Bob knew~$\psi$ at least~5 time steps ago'' when she acts, since knowing~$\psi$ is a necessary condition for Bob's performing $\alpha_B$ (see~\cite{BzMTARK2013,BZMacm}).  
As these examples illustrate, given \KoP, coordination can give rise to nested knowledge. 

Simple instances of the \KoP\ are often quite straightforward. Ensuring and detecting $K_1(Max=c)$ is often fairly intuitive, and it not justify the overhead involved in developing a theory of knowledge for multi-agent systems. However, satisfying statements involving nested knowledge in particular models of computation can quickly become nontrivial. For this, it is best to have a clear mathematical model of knowledge in multi-agent systems. The next section reviews the runs and systems model.

\section{Modeling Knowledge Using Runs and Systems}
\label{sec:runs}
We now review the runs and systems model of knowledge of \cite{FHMV,HM1}. The interested reader should consult \cite{FHMV} for more details. A \defemph{global state} is an ``instantaneous snapshot'' of the system at a given time. 
Let~$\G$ denote a set of global states. 
Time will be identified with the natural numbers~$\Nat=\{0,1,2,\ldots\}$ for ease of exposition. 
A \defemph{run} is a function \mbox{\,\defemph{r}\hspace{.2mm}$:\,\Nat\to\G\,$} associating a global state with each instant of time. 
Thus, $r(0)$ is the run's initial state, $r(1)$ is the next global state, and so on. 
A \defemph{system} is a set~$\bm{R}$ of runs. 
The same global state can appear in different runs, and in some systems may even appear more than once in the same run. 

A central notion in our framework is that of an agent's {\em local state}, whose role is to capture the agent's local information at a given point.  The precise details of the local state depend on the application. It could be the complete contents of an agent's memory at the given instant, or the complete sequence of events that it has observed so far. for example. 
 The rule of thumb is that the local state should consist of the local information that the agent may use when deciding which actions to take. 
Thus, for example, if agents are finite-state machines, it is often natural to identify an agent's local state with the automaton state that it is in. 
Formally, we assume that 
every global state determines a unique \mbox{\defemph{local state}} for each agent.
We denote agent~$i$'s local state in the global state $r(t)$ by $r_i(t)$.
Moreover, a global state with~$n$ agents $\Agents=\{1,\ldots,n\}$ will have the form $r(t)=\langle r_e(t),r_1(t),\ldots,r_n(t)\rangle$, where $r_e(t)$ is called the local state of the {\em environment}, and will serve to represent all aspects of the global state that are not included in the agents' local states. For example, it could represent messages in transit, the current topology of the network including what links may be down, etc.

\subsection{Syntax and Semantics}
\label{sec:syntax}
We are interested in a propositional logic of knowledge, in which propositional facts and epistemic facts can be expressed. Facts will be considered to be true or false at a point $(r,t)$, with respect to a system~$R$. 
More formally, given a set~$\Phi$ of primitive propositions and a set  $\Proc=\{1,\ldots,n\}$ of the agents in the system, we define a propositional language $\LKn(\Phi)$ by closing~$\Phi$ under negation~`$\neg$' and conjunction~`$\wedge$', as well as under knowledge operators~$K_i$ for all $i\in\Proc$ (see~\cite{HM2}). Thus, for example, if $p,q\in\Phi$ are primitive propositions and $i,j\in\Proc$ are agents, then~ $\neg K_ip\wedge K_jK_i\neg K_jq$ ~is a formula in $\LKn(\Phi)$.
We typically omit the set~$\Phi$ and call $\LKn$ the language for knowledge with~$n$ agents. 

In a multi-agent system facts about the world, as well as the knowledge that agents have, can change dynamically from one time point to the next. We thus consider the truth of formulas of $\LKn$ at {\em points} of a system~$R$, where a point is a pair $\bm{(r,t)}\in R\times\Nat$, and it is used to refer to time~$t$ in the run~$r$. We denote the set of points of a system~$R$ by $\Pts(R)\eqdef R\times\Nat$. Points will play the role of states of a Kripke structure. 

The set~$\Phi$ of primitive propositions used in the analysis of any given multi-agent system~$R$ will depend on the application. Their truth at the points of the system needs to be explicitly defined. This is done by an {\em interpretation} $\pi:\Phi\times\Pts(R)\to\{\True,\False\}$, where $\pi\big(q,(r,t)\big)=\True$ means that the proposition~$q$ holds at $(r,t)$. 
Formally, an {\em interpreted system} w.r.t.\ a set~$\Phi$ of primitive propositions is a pair $(R,\pi)$ 
consisting of the system~$R$ and interpretation~$\pi$ for~$\Phi$ over $\Pts(R)$. Just as we typically omit explicit reference to~$\Phi$, we shall omit $\pi$ as well, when this is unambiguous.

We assume from here on that  the environment's state~\mbox{$r_e(t)$} in a global state $r(t)$ contains a ``{\em history}'' component~$h$ that records  all actions taken by all agents at times $0$,$1$,\ldots,$t-1$. Formally, we take~$h$ to be a set of triples $\langle{\sfa,i,t'}\rangle$, which grows monotonically in time. An action $\sfa$ is considered to be performed by~$i$ at the point $(r,t)$ if and only if the triple $\langle{\sfa,i,t}\rangle$, denoting that action~$\sfa$ was performed by agent~$i$ at time~$t$,  appears in the history component~$h$ of $r_e(t')$ for all times $t'>t$.%
 \footnote{Our definition does not imply or assume that the actions are observed, observable or recorded by any of the agents. Whether that is the case depends on the application.}    
For the analysis in this paper, we will also assume that~$\Phi$ includes propositions of the form 
$\ddo_i(\sfa)$ and $\dDid_i(\sfa)$ for agents $i\in\Proc$ and actions~$\alpha$. 
  With this assumption, what actions  are performed at any given point $(r,t)$ is uniquely determined by the run~$r$. 

We will consider interpretations $\pi$ that, on these propositions, are defined by 
\vspace{1.5mm}

\noindent\begin{tabular}{l l l}
$\pi\big(\ddo_i(\sfa),(r,t)\big)=T$& iff & agent~$i$ performs $\alpha$ at $(r,t)$\\[2ex] 
$\pi\big(\dDid_i(\sfa),(r,t)\big)=T$ & iff & $\pi\big(\ddo_i(\sfa),(r,t')\big)=T$ ~ 
 holds for some $t'\le t$
\end{tabular}
\vspace{3mm}

We allow $t'=t$ in the definition of $\dDid_i(\sfa)$  for technical convenience; it simplifies our later analysis slightly. 
 

Our model of knowledge will follow the standard Kripke-style possible worlds approach. The possibility relations that we use are induced directly from the system~$R$ being analyzed; two points are considered indistinguishable to an agent if its local states at the two points are the same. 
 More formally: 
\begin{definition}
\label{def:ind}
If $r_i(t)=r'_i(t')$, then $(r,t)$ and $(r',t')$ are called \defemph{indistinguishable to~$\bm{i}$}, denoted by 
\mbox{$(r,t)\bm{\rlsim{i}}(r',t')$}. 
%
\end{definition}


Formulae of $\LKn$ are interpreted at a point $(r, t)$ of an
interpreted system $(R, \pi)$ by means of the satisfaction
relation `$\models$', which is defined inductively by:
\begin{itemize}
\item[] $\!\!\!\!\!\!(R,r,t) \models p$ iff $(r,t) \in \pi(p)$;
\item[] $\!\!\!\!\!\!(R,r,t) \models \neg\varphi$ iff $(R,r,t) \not\models
  \varphi$;
\item[] $\!\!\!\!\!\!(R,r,t) \models \varphi \wedge \psi$ iff both $(R,r,t)
  \models\varphi$ and \mbox{$(R,r,t) \models\psi$};
\item[] $\!\!\!\!\!\!(R,r,t)\models K_i\varphi$ iff $(R,r',t') \models
  \varphi$  for all $(r',t')\in\Pts(R)$
  ~such that $(r',t')\bm{\rlsim{i}}(r,t)$.~ 
\end{itemize}

 We say that~$\bm\varphi$ is \defemph{valid in} the system~$\bm{R}$, and write  $\bm{R\sat\varphi}$,  if $(R,r,t)\sat\varphi$ for all points $(r,t)\in\Pts(R)$. We say that $\bm\varphi$~\defemph{validly implies $\bm\psi$ in}~$\bm{R}$ if $\varphi\imp\psi$ is valid in~$R$.
Since, by Definition~\ref{def:ind},  the~$\rlsim{i}$ relations are equivalence relations, each knowledge operator $K_i$ satisfies the S5 axiom system~\cite{HM2}. In particular, it satisfies the \defemph{knowledge property} (or Knowledge Axiom) that $K_i\varphi\imp\varphi$ is valid in all systems. 

It is instructive to relate our modeling using runs and systems to standard multi-agent Kripke structures.  
For every system~$R$ there is a corresponding Kripke structure $M_R=(S_R,\pi,\sim_1,\ldots,\sim_n)$ for~$n$ agents such that $S_R=\Pts(R)$ and $\mbox{`$\sim_i$'}=\mbox{`$\rlsim{i}$'}$ for every~$i$.
They correspond in that 
$(R,r,t)\sat\varphi\,$ iff $\,M_R,(r,t)\sat\varphi$ is guaranteed for all $(r,t)\in\Pts(R)=S_R$ and $\varphi\in\LKn(\Phi)$ (for details, see~\cite{FHMV}). 

The system~$R$ will determine the space of possible runs and possible points, which play a crucial role in determining the truth of facts involving knowledge. 
For example, consider a run~$r$ in which  Alice sends Bob a message at time~1, and Bob receives it at time~2.  If~$R$ is a system in which messages may be lost, or may take longer than one time step to be delivered, then 
 Alice would not know at time~2 \big(i.e., w.r.t.~$(R,r,2)$\big)  that her message has been delivered, because there is another run $r'\in R$ 
 that she cannot tell apart from~$r$ at time~2, in which her message is not (or not yet) delivered by that time.  The same run~$r$ also belongs to another  system~$R'$  in which messages are always reliably delivered in exactly one round. 
With respect to $(R',r,2)$, however, Alice {\em would} know at time~2 that her message has been delivered.

Our definition of knowledge is rather flexible and widely applicable. The set~$R$ of the possible runs immediately induces what the agents know. 
Observe that the definition of knowledge is completely external. It ascribes knowledge to agents in the system even if the protocol they follow, as well as the actions that they perform, do not involve the knowledge terminology in any way.  Moreover, the agents do not need to be complex or sophisticated for the definition to apply.  Indeed, in a model of a very simple system consisting of a bed lamp and its electric cable, a switch in the OFF state can be said to know that the lamp is not lit;  what the same switch would know in the ON state would depend on the system~$R$ under consideration, which determines the runs considered possible. E.g.,  if~$R$ contains a run in which the lamp is burnt out, then in the ON state the switch would not know that the lamp is shining light. On the other hand, if the lamp can never burn out, and the cord, plug and switch are in proper working order in all runs of~$R$, then  in the ON state the switch {\em would} know that the lamp is shining light. 
As this example shows, knowledge under this definition does not require the ``knower'' to compute what it knows. Indeed, this definition of knowledge is not sensitive to the computational complexity of determining what is known. 
In most cases, of course, we will ascribe knowledge to agents or components that can perform actions, which is not the case in the light switch example. And agents might need to explicitly establish whether they know relevant facts. 
We now provide a statement and proof of the knowledge of preconditions principle~\KoP.

\section{Formalizing the Knowledge of Preconditions Principle}
\label{sec:formal-kop}

Intuitively, the \KoP\  states that if a particular fact~$\psi$ is a necessary condition for an agent to perform an action~$\sfa$, then the agent must in fact {\em know} $\psi$ in order to act. In other words, {\em knowing}~$\psi$ is also a necessary condition for performing the action. 
We formalize the claim and prove it as follows. 
We say that~$\bm{\psi}$  \defemph{is a necessary condition for
$\bm{\ddo}_i(\sfa)$  in}~$\bm{R}$ if $(R,r,t)\sat\ddo_i(\sfa)$ 
holds only if $(R,r,t)\sat{\psi}$, for all $(r,t)\in\Pts(R)$.
 Clearly,  
 the customer's good credit is a necessary condition for the ATM dispensing cash. That is, suppose that a bank makes use of a correct implementation of an ATM protocol, which satisfies the credit requirement. Then, in the system~$R$ consisting of the set of all possible histories (runs) of the bank's (and the ATM's) transactions, good credit is a necessary condition for receiving cash from the ATM.

It is often of interest to consider facts whose truth depends only on a given agent's loca state. 
Such, for example, may be the receipt of a message, or the observation of a signal,  by the agent. 
Whether $x=0$ for a local variable~$x$, for example, would be a natural local fact. Moreover, if an agent has perfect recall, then any events that it has observed in the past will give rise to local facts. Finally, since knowledge is defined based on an agent's local state, then a fact of the form $K_i\varphi$ constitutes a local fact. Indeed, there is a simple way to define the local facts above using knowledge. Namely, we say that~$\bm\varphi$ \defemph{is i-local in}~$\bm{R}$ if $\,R\sat(\varphi\imp K_i\varphi)$. 

\vspace{1mm}
The formalism of \cite{FHMV} defines protocols as explicit objects, and defines {\em contexts} that describe the possible initial states and the model of computation. 
This provides a convenient and modular way of constructing systems. Namely, given a protocol~$P$ and a {\em context}~$\gamma$, the system~$R=R(P,\gamma)$ is defined to be the set of all runs of protocol~$P$ in~$\gamma$. The runs of this system embody all of the properties of the context, as they arise in runs of~$P$. This includes, for example, any timing assumptions, possible values encountered, possible topologies of the network, etc. They also embody the relevant properties of the protocol, because in all runs considered possible the agents follow~$P$. 

In this paper, we do not define protocols and contexts. Rather, we treat the \KoP\ in a slightly simpler and more abstract setting. 
We say that an action~$\sfa$ is a  \defemph{conscious action for~$\bm{i}$ in~$\bm{R}$} if 
$i$'s local state completely determines whether~$i$ performs~$\sfa$. If its local state at two points $(r,t)$ and $(r',t')$ of~$R$ is the same, then $(R,r,t)\sat\ddo_i(\sfa)$ iff $(R,r',t')\sat\ddo_i(\sfa)$. 
Conscious actions are quite prevalent in many systems of interest. 
For example, suppose that agent~$i$ follows a deterministic protocol, so that its action at any given point is a function of its local state. If, in addition, agent~$i$ is allowed to move at every time step, then all of its actions are conscious actions. 
We remark that, since conscious actions depend on an agent's local state, then if~$\sfa$ is conscious for~$i$ in~$R$ then 
$(R,r,t)\sat \ddo_i(\sfa)$ holds iff $(R,r,t)\sat K_i\ddo_i(\sfa)$ does, for all $(r,t)\in\Pts(R)$. 

We are now ready to prove a formal version of the \KoP: 

\begin{theorem}[The \KoP\ Theorem]
\label{thm:kop-1}
Let $\sfa$ be a conscious action for~$i$ in~$R$. 
If   {$\psi$}  is a necessary condition for {$\ddo_i(\sfa)$} in~$R$,  
then  ${K_i\psi}$  is also a necessary condition for ${\ddo_i(\sfa)}$ in~$R$.
\end{theorem}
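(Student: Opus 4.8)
The plan is to prove the statement directly, without contraposition: fix an arbitrary point $(r,t)\in\Pts(R)$ at which $\ddo_i(\sfa)$ holds, and show that $K_i\psi$ holds there too. Since $(r,t)$ is arbitrary, this is exactly what it means for $K_i\psi$ to be a necessary condition for $\ddo_i(\sfa)$ in~$R$, per the definition of necessary condition. The entire argument amounts to unfolding the semantics of $K_i$ and then chaining the two hypotheses---consciousness and necessity---to transport truth across $i$-indistinguishable points.

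First I would unfold the satisfaction clause for $K_i$ from \cref{sec:syntax}: to establish $(R,r,t)\sat K_i\psi$, it suffices to show $(R,r',t')\sat\psi$ for every point $(r',t')\in\Pts(R)$ with $(r',t')\rlsim{i}(r,t)$. So I would fix an arbitrary such indistinguishable point $(r',t')$. By \cref{def:ind}, the relation $(r',t')\rlsim{i}(r,t)$ holds precisely because $r'_i(t')=r_i(t)$, i.e.\ agent~$i$ has the same local state at the two points.

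The two remaining steps simply chain the hypotheses. Since $\sfa$ is a conscious action for~$i$ in~$R$ and the local states at the two points coincide, the consciousness property propagates the action: from $(R,r,t)\sat\ddo_i(\sfa)$ we obtain $(R,r',t')\sat\ddo_i(\sfa)$. Then, because $\psi$ is a necessary condition for $\ddo_i(\sfa)$ in~$R$, the fact that $\ddo_i(\sfa)$ holds at $(r',t')$ forces $(R,r',t')\sat\psi$. As $(r',t')$ ranged over an arbitrary $i$-indistinguishable point, this is exactly the statement $(R,r,t)\sat K_i\psi$, which completes the argument.

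I do not expect a genuine obstacle here: the result falls out of the definitions as a short chain of implications. The one point worth flagging is the indispensability of the consciousness hypothesis, which is the lever that makes the proof work. Necessity by itself only guarantees $\psi$ at points where $\sfa$ is \emph{actually} performed; consciousness is what guarantees that $\sfa$ is performed at \emph{every} point $i$ cannot distinguish from $(r,t)$, and hence that $\psi$ holds throughout $i$'s entire indistinguishability class. Without it, one could have $(R,r,t)\sat\ddo_i(\sfa)$ while an indistinguishable $(r',t')$ satisfies $\neg\ddo_i(\sfa)\wedge\neg\psi$, in which case $K_i\psi$ would fail at $(r,t)$.
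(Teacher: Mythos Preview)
Your proof is correct and uses essentially the same argument as the paper: consciousness transports $\ddo_i(\sfa)$ across $i$-indistinguishable points, and then necessity yields $\psi$ there. The only difference is cosmetic---the paper phrases it contrapositively (assume $K_i\psi$ fails at a point where $\ddo_i(\sfa)$ holds and derive a point where $\psi$ fails but $\ddo_i(\sfa)$ holds), whereas you argue directly; the underlying chain of implications is identical.
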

\begin{proof}
We will show the contrapositive. 
Let $\sfa$ be a conscious action for~$i$ in~$R$, and assume that $K_i\psi$ is \underline{not} a necessary condition for $\ddo_i(\sfa)$ in~$R$. 
Namely, there exists a point $(r,t)\in\Pts(R)$ such that 
both  $(R,r,t)\sat\ddo_i(\sfa)$ and $(R,r,t)\not\sat K_i\psi$. 
Given the latter, we have by the definition of `$\sat$' for $K_i$ that there exists a point $(r',t')\in\Pts(R)$ such that 
both $(r',t')\rlsim{i}(r,t)$ and $(R,r',t')\not\sat\psi$. 
Since~$\sfa$ is a conscious action for~$i$ in~$R$ and $(R,r,t)\sat\ddo_i(\sfa)$ we have that $(R,r,t)\sat K_i\ddo_i(\sfa)$. 
It follows from $(r',t')\rlsim{i}(r,t)$ by the definition of `$\sat$' for~$K_i$ that $(R,r',t')\sat\ddo_i(\sfa)$ holds. But since $(R,r',t')\not\sat\psi$, we conclude  that 
$\psi$ is \underline{not} a necessary condition for~$\ddo_i(\sfa)$ in~$R$, establishing the countrapositive claim. 
 \end{proof}


  \cref{thm:kop-1}  applies to all multi-agent systems. It immediately implies, for example, that 
  a necessary condition for the ATM to dispense cash is   $K_{atm}(\mbox{\tt good\_credit})$.
 The theorem is model independent; it does not depend on timing assumptions, on the topology of the system (even on whether agents communicate by message passing or via reading and writing to registers in a shared memory), or on the nature of the activity that is carried out. For every necessary condition for a conscious action,  {\em knowing} that the condition holds is also a necessary condition. 
  

\section{Coordinating Simultaneous Actions}
\label{sec:ck}

Recall that the language $\LKn$ contains formulas in which knowledge operators can be {\em nested} to arbitrary finite depth. 
It is sometimes useful to consider a state of knowledge called {\em common knowledge} that goes beyond any particular nested formula. 
Intuitively, a fact~$\psi$ is common knowledge if everyone knowing that everyone knows \ldots, that everyone knows the fact~$\psi$, to every finite depth. Common knowledge has a number of equivalent definitions, one of which is as follows: 

\begin{definition}[Common Knowledge]
\label{def:CK}
Fix a set of agents~$G$  and a fact~$\psi$. We
denote by~$C_G\psi$ 
the fact that 
 \defemph{$\bm{\psi}$ is common knowledge to~$\bm G$}. 
Its truth at points of a system~$R$ is defined by:
\vskip3mm
\begin{tabular}{lcl}
$(R,r,t)\sat C_G\psi$ & iff & $(R,r,t)\sat 
K_{i_1}K_{i_2}\cdots K_{i_m}\psi\!$ ~holds for all   $\langle i_1,i_2,\ldots,i_m\rangle\in G^m$ ~and all $m\ge 1$. 
\end{tabular}
\end{definition}

Common knowledge, a term coined by Lewis in \cite{Lew}, plays an important role in the analysis of games~\cite{Au}, distributed systems~\cite{HM1}, and many other multi-agent settings. 
Clearly, common knowledge is much stronger than ``plain''  knowledge. Indeed, $C_G\psi$ validly implies $K_j\psi$, for all  agents $j\in G$. 
Since common knowledge requires infinitely many facts to hold, 
it is not {\em a priori} obvious that $C_G\varphi$ can be attained at a reasonable cost, or even whether it can ever be attained at all, in settings of interest (see~\cite{ChM,FHMV,HM1}). 
We will now show that there are natural applications for which attaining common knowledge is essential. 

Intuitively, distinct actions are simultaneous in~$R$ if they can only  be performed together; whenever one is performed, all of them are performed simultaneously. It is possible to define simultaneous coordination formally in terms of necessary conditions: 
\begin{definition}[Simultaneous Actions]
Let~$G$ be a set of agents. We say that a set of actions $\bm{A}=\{\sfa_i\}_{i\in G}$ \defemph{is}
(necessarily) \defemph{simultaneous in}~$\bm{R}$ if $\ddo_i(\sfa_i)$ is a necessary condition for $\ddo_j(\sfa_j)$ in~$R$, for all $i,j\in G$. 
\end{definition}
Suppose that the actions in~$A$ are simultaneous in~$R$ in the above sense. Then the \KoP\   immediately implies (by Theorem~\ref{thm:kop-1}) that a necessary condition for performing an action in~$A$ 
is knowing that the other actions are also (currently) being performed. 
In fact, however, much more must be true. 
We now present a strong variant of the \KoP, which shows that in order to perform simultaneous actions agents must attain common knowledge of their necessary conditions. 
Notice that in order to allow a set of actions by the agents in~$G$ to be simultaneous, the system~$R$ must be sufficiently deterministic to ensure that if $i,j\in G$ are distinct agents and~$(R,r,t)\sat \ddo_i(\sfa)$ holds, then~$j$ will be scheduled to perform an action at $(r,t)$. For otherwise, there would be no way to ensure simultaneous execution of the actions by the agents in~$G$. Conscious actions fit this setting well in this case. We proceed as follows.

\begin{theorem}[C-K of Preconditions]
\label{thm:ckop}
Let $G$ be a set of agents and let $A=\{\sfa_i\}_{i\in G}$ be a set of necessarily simultaneous actions in the system~$R$. 
Moreover, suppose that each action~$\sfa_i\in A$ is a conscious action for  its agent~$i$ in~$R$. 
  If 
%
 $\psi$ is a necessary condition for  {$\ddo_i(\sfa_i)$} for some $i\in G$, then 
 $\,C_G\psi$ is a necessary condition for {$\ddo_j(\sfa_j)$}, for all $j\in G$.
\end{theorem}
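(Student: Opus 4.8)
The plan is to reduce the statement to a single self-evident fact and then bootstrap common knowledge from it. Let $\varphi_A \eqdef \bigwedge_{k \in G} \ddo_k(\sfa_k)$ denote the fact that all the actions in~$A$ are being performed. Since the actions are necessarily simultaneous, $\ddo_j(\sfa_j) \imp \ddo_k(\sfa_k)$ is valid in~$R$ for every pair $j,k \in G$, so each conjunct is validly equivalent to $\varphi_A$; in particular $\ddo_j(\sfa_j) \imp \varphi_A$ is valid for every $j \in G$. Hence it suffices to prove that $\varphi_A \imp C_G\psi$ is valid in~$R$, since this immediately yields $C_G\psi$ as a necessary condition for each $\ddo_j(\sfa_j)$.

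The crux of the argument, and the step I expect to be the main obstacle, is showing that $\varphi_A$ is self-evident to every agent in~$G$: that $\varphi_A \imp K_k\varphi_A$ is valid in~$R$ for each $k \in G$. This is precisely where both hypotheses are used together. Suppose $(R,r,t) \sat \varphi_A$ and fix $k \in G$. Let $(r',t') \rlsim{k} (r,t)$ be any $k$-indistinguishable point, so $r'_k(t') = r_k(t)$. Because $\varphi_A$ holds at $(r,t)$ we have $(R,r,t)\sat\ddo_k(\sfa_k)$, and since $\sfa_k$ is a conscious action for~$k$, whether $k$ performs $\sfa_k$ is determined by its local state; hence $(R,r',t')\sat\ddo_k(\sfa_k)$ as well. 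Simultaneity then propagates this to all of~$G$, giving $(R,r',t')\sat\varphi_A$. As $(r',t')$ was an arbitrary $k$-indistinguishable point, $(R,r,t)\sat K_k\varphi_A$.

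With self-evidence in hand, the remainder is routine. I would argue by induction on $m$ that $\varphi_A \imp K_{i_1}K_{i_2}\cdots K_{i_m}\varphi_A$ is valid for every sequence $\langle i_1,\ldots,i_m\rangle \in G^m$: the base case $m=1$ is self-evidence, and the inductive step follows by applying self-evidence at the outer $K_{i_1}$ and the induction hypothesis at each $i_1$-accessible point. By Definition~\ref{def:CK} this yields $\varphi_A \imp C_G\varphi_A$. Finally, since $\psi$ is a necessary condition for $\ddo_i(\sfa_i)$ and $\varphi_A$ contains $\ddo_i(\sfa_i)$ as a conjunct, $\varphi_A \imp \psi$ is valid; monotonicity of~$C_G$ (itself immediate from the monotonicity of each~$K_k$) then gives $C_G\varphi_A \imp C_G\psi$, and chaining with $\varphi_A \imp C_G\varphi_A$ yields $\varphi_A \imp C_G\psi$. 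By the reduction of the first paragraph this establishes $C_G\psi$ as a necessary condition for every $\ddo_j(\sfa_j)$, $j \in G$, completing the argument.
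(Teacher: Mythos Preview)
Your argument is correct, but it takes a different route from the paper's. The paper proves directly, by induction on~$m$, that every nested formula $K_{i_1}\cdots K_{i_m}\psi$ is a necessary condition for each $\ddo_j(\sfa_j)$: the inductive step applies the \KoP\ (\cref{thm:kop-1}) to the inductive hypothesis to prepend a $K_{i_1}$, and then uses simultaneity (the paper's Observation~1) to transfer the conclusion from $\ddo_{i_1}(\sfa_{i_1})$ to $\ddo_j(\sfa_j)$. Thus the paper keeps~$\psi$ at the center and invokes \KoP\ once per level of nesting.

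You instead introduce the auxiliary event $\varphi_A=\bigwedge_{k\in G}\ddo_k(\sfa_k)$, show that it is \emph{self-evident} to every $k\in G$ (combining conscious actions with simultaneity), deduce $\varphi_A\imp C_G\varphi_A$, and then transfer to $C_G\psi$ by monotonicity. Conceptually this is the familiar ``public events become common knowledge'' idea; it is tidy and never explicitly invokes \cref{thm:kop-1}, though your self-evidence step is essentially an inlined instance of it. The paper's version, by contrast, makes the dependence on \KoP\ explicit and highlights that common knowledge of preconditions arises from iterated application of the principle, which is the narrative point of the section. Both proofs are short and of comparable difficulty; yours is arguably more modular, while the paper's better serves its expository goal.
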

\begin{proof}
Assume that~$A$ is a set of necessarily simultaneous actions for~$G$ in~$R$.
It is straightforward to show the following claim.
\begin{observation}
\label{obs1}
Let $\sfa_i,\sfa_j\in A$ be the actions for agents~$i$ and~$j$, respectively.  If a fact~$\varphi$ is a necessary condition for $\ddo_i(\sfa_i)$ in~$R$ then~$\varphi$ is also a necessary condition for $\ddo_j(\sfa_j)$ in~$R$. 
\end{observation}

To prove this observation notice that, 
by assumption, both (a) $R\sat \ddo_j(\sfa_j)\imp\ddo_i(\sfa_i)$ and 
(b)~$R\sat \ddo_i(\sfa_i)\imp\varphi$ hold. For all $(r,t)\in\Pts(R)$, if $(R,r,t)\sat\ddo_j(\sfa_j)$ then 
$(R,r,t)\sat\ddo_i(\sfa_i)$ holds by (a) and so \mbox{$(R,r,t)\sat\varphi$} by (b). Thus, 
$\varphi$ is a necessary condition for $\ddo_j(\sfa_j)$ in~$R$.


Assume that $\psi$ is a necessary condition for $\ddo_i(\sfa_i)$, for some $i\in G$.  
We shall prove by induction on~$m\ge 0$ that 
$K_{i_1}K_{i_2}\cdots K_{i_m}\psi$ is a necessary condition for $\ddo_j(\sfa_j)$ in~$R$, for every $j\in G$ and {\em all} 
sequences $\langle i_1,\ldots,i_m\rangle\in G^m$ (of~$m$ agent names from~$G$). 
This will establish that $(R,r,t)\sat\ddo_j(\sfa_j)$ implies $(R,r,t)\sat C_G\psi$ for all $(r,t)\in\Pts(R)$, and thus $C_G\psi$ is a necessary condition for {$\ddo_j(\sfa_j)$} for all $j\in G$, as claimed. 
\begin{itemize}
\item Base case: Let  $m=0$. The claim in this case is that if $\psi$ is a necessary condition for $\ddo_i(\sfa_i)$ then~$\psi$ is also a necessary condition for $\ddo_j(\sfa_j)$. This is precisely Observation~\ref{obs1}, with $\varphi\eqdef\psi$. 

\item Inductive step: Let $m\ge 1$, and assume that the claim holds for all $j'\in G$ and all sequences in~$G^{m-1}$. 
Fix $j\in G$ and a sequence $\langle i_1,i_2,\ldots,i_m\rangle\in G^m$. Its suffix $\langle i_2,\ldots,i_m\rangle$ is a sequence in~$G^{m-1}$. Thus, $K_{i_2}\cdots K_{i_m}\psi$ is a 
necessary condition for $\ddo_{i_1}(\sfa_{i_1})$ by the inductive hypothesis for~$m-1$ (applied to $G^{m-1}$ and agent $j'=i_1\in G$). Given that $\sfa_{i_1}$ is a conscious action by~$i_1$, we can apply  Theorem~\ref{thm:kop-1} to the necessary condition $K_{i_2}\cdots K_{i_m}\psi$ and obtain that $K_{i_1}K_{i_2}\cdots K_{i_m}\psi$ is a necessary condition for $\ddo_{i_1}(\sfa_{i_1})$. 
By Observation~\ref{obs1} we have that 
$K_{i_1}K_{i_2}\cdots K_{i_m}\psi$ is also a necessary condition for $\ddo_{j}(\sfa_j)$ in~$R$, and we are done.
\end{itemize}
\end{proof}

\subsection{Common Knowledge and the  Firing Squad 
Problem}
\label{sec:firing}
As an illustration of the applicability of Theorem~\ref{thm:ckop} to a concrete application, consider a simple version of the {\em Firing Squad} problem. In this instance, the set of agents~$G$ in the system must simultaneously perform an action (say each agent~$i\in G$ should perform the action~$\fire_i$) in response to the receipt, by any agent in~$G$, of a particular external input called a `$\go$' message. The $\fire_i$ action can stand for a simultaneous change in shared copies of a database, a public announcement at different sites of the system, or any other actions that need to take place simultaneously. Moreover, $\fire_i$ actions are allowed only if they are preceded by such a $\go$ message. For simplicity, we consider a case in which none of the agents in~$G$ may fail, and they all must satisfy the specification. 

Let $\recgo$ be a proposition that is true at $(r,t)\in\Pts(R)$ if a $\go$ message is received by any of the agents in~$G$ at a point $(r,t')$ of~$r$ 
at a time $t'\le t$. According to the specification of the Firing Squad problem, $\recgo$ is a necessary condition for the $\fire_i$ actions. An immediate consequence of Theorem~\ref{thm:ckop} is:
\begin{corollary}
\label{cor:fs}
$C_G\recgo$ is a necessary condition for all $\fire_i$ actions in the Firing Squad problem. 
\end{corollary}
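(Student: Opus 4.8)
The plan is to obtain the corollary as a direct instance of Theorem~\ref{thm:ckop}, taking $G$ to be the firing squad, the action set to be $A=\{\fire_i\}_{i\in G}$, and $\psi\eqdef\recgo$. Once the premises of that theorem are checked, its conclusion---that $C_G\recgo$ is a necessary condition for each $\ddo_j(\fire_j)$---is exactly the statement to be proved, so no further epistemic argument is needed.

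The real work lies in verifying that the Firing Squad setup meets the three hypotheses of Theorem~\ref{thm:ckop}. First, the specification states outright that $\recgo$ is a necessary condition for $\ddo_i(\fire_i)$ for every $i\in G$; this supplies the hypothesis on $\psi$ (in fact for every $i$, though the theorem only requires it for one). Second, I would argue that $A$ is a set of necessarily simultaneous actions: the specification demands that all agents in~$G$ fire together upon receipt of a $\go$ message, so in no run does an agent fire in isolation. Hence whenever $(R,r,t)\sat\ddo_i(\fire_i)$ holds for some~$i$, then $(R,r,t)\sat\ddo_j(\fire_j)$ holds for every $j\in G$, which is precisely the statement that $\ddo_i(\fire_i)$ is a necessary condition for $\ddo_j(\fire_j)$ in~$R$ for all $i,j\in G$. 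Third, each $\fire_i$ must be a conscious action for~$i$: agent~$i$ decides whether to fire on the basis of its local state alone, so that $i$'s local state determines whether $\ddo_i(\fire_i)$ holds.

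With these three premises in place, Theorem~\ref{thm:ckop} applies with $\psi=\recgo$ and yields that $C_G\recgo$ is a necessary condition for $\ddo_j(\fire_j)$ for every $j\in G$, establishing the corollary.

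The main obstacle is not the epistemic reasoning---that is fully discharged by Theorem~\ref{thm:ckop}---but in confirming that the informal Firing Squad specification genuinely delivers the formal hypotheses, in particular necessary simultaneity and consciousness of the $\fire_i$ actions. One must be careful that the system~$R$ is deterministic enough that the agents are co-scheduled, so that simultaneous firing is even achievable; the excerpt already flags this, noting that the no-failure assumption together with the requirement that all agents satisfy the specification are what rule out one agent firing while another is not scheduled to act.
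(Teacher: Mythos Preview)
Your proposal is correct and takes essentially the same approach as the paper: the paper states the corollary as ``an immediate consequence of Theorem~\ref{thm:ckop}'' and gives no further proof, so your explicit verification of the three hypotheses (simultaneity of the $\fire_i$'s, consciousness of each $\fire_i$, and $\recgo$ being a necessary condition) is exactly the unpacking that the paper leaves to the reader.
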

Given Corollary~\ref{cor:fs}, any solution to the firing squad problem must first attain common knowledge that a $\go$ message has been received. 
It is well-known (see~\cite{Fagin1999revisited,HM1}) that common knowledge of a fact is observed simultaneously at all agents it involves. Suppose that  every 
$i\in G$ performs $\fire_i$ when $C_G\recgo$ first holds.  Since all agents in~$G$ will come to know that $C_G\recgo$ immediately,  they will fire simultaneously, as required by the problem specification. Indeed, Theorem~\ref{thm:ckop} shows that this is the first time at which they {\em can} perform according to a correct protocol. Implementing simultaneous tasks such as the Firing Squad therefore inherently involves, and often reduces to, ensuring and detecting $C_G\recgo$.
Recall that depending on the properties of the system, attaining such common knowledge might be impossible in some cases, or it might incur a substantial cost in others. Just as in the case of the \KoP, this necessity is not due to our formalism. It is only exposed by our analysis. In every protocol that implements such a task correctly, the firing actions cannot be performed unless  $C_G\recgo$ is attained.

There is an extensive literature on using common knowledge to obtain optimal protocols for simultaneous tasks \cite{DHM,DM,HerMT,MizMos,MizMos1,MT, Neig90,neiger1999using,NeigTuttle}. Typically, they involve an explicit proof that common knowledge of a particular fact is a necessary condition for performing a set~$A$ of necessarily simultaneous actions.  \cref{thm:ckop} or a variant of it suited for fault-tolerant systems can be used to establish this result in all of these cases. 
Moreover, one of the main insights from the analysis of \cite{HM1} and of \cite{Fagin1999revisited} is that when simultaneous actions are performed, the participating agents have common knowledge that they are being performed. \cref{thm:ckop} is a strict generalization of this fact.

\section{Temporally Ordering Actions}
\label{sec:nested}
So far, we have seen two essential connections between knowledge and  coordinated action: performing actions requires knowledge of their necessary conditions, and performing simultaneous actions requires common knowledge of their necessary conditions. We now further extend the connection between states of knowledge and coordination, by showing that temporally ordering actions depends on attaining nested knowledge of necessary conditions.
Following~\cite{BZMacm}, we define temporally ordered actions:

\begin{definition}[Ben Zvi and Moses]
\label{def:ordered}
A sequence of actions $\bm{\langle \sfa_1,\ldots,\sfa_k\rangle}$ (for agents ~$1,\ldots,k$, respectively) 
\defemph{is} (linearly) \defemph{ordered in~$\bm{R}$} if $\,\dDid_{j-1}(\sfa_{j-1})$ is a necessary condition for $\ddo_j(\sfa_j)$ in~$R$. 
\end{definition}
Observe that this definition does not force an action~$\sfa_j$ to occur in a run in which~$\sfa_{j-1}$ occurs. 
Rather, if an action~$\sfa_j$ is performed in a given run, then it must be preceded by all actions~$\sfa_1,\ldots,\sfa_{j-1}$. 
Moreover, if we denote the time at which an action~$\sfa_i$ is performed in a run~$r$ by~$t_i$, then we require that $t_{j-1}\le t_j$ for every action~$\sfa_j$  performed in~$r$. 
\begin{claim}
\label{claim1}
Assume that the sequence $\langle \sfa_1,\ldots,\sfa_k\rangle$ is ordered in~$R$. 
Then $R\sat \big(\dDid_j(\sfa_j)\imp \dDid_{j-1}(\sfa_{j-1})\big)$ for all 
$2\le j\le k$. 
\end{claim}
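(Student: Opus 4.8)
The plan is to prove the validity of the implication pointwise, by fixing an index $j$ with $2\le j\le k$ and an arbitrary point $(r,t)\in\Pts(R)$ at which the antecedent holds, and then deriving the consequent at that same point. The whole argument rests on the interaction between the two ingredients we have: the ordering hypothesis, which is phrased in terms of $\ddo$ (the momentary action), and the $\dDid$ operator, which is the ``has occurred by now'' closure of $\ddo$ in time. So the first thing I would do is unfold both occurrences of $\dDid$ in terms of $\ddo$, since the ordering assumption only speaks directly about $\ddo$.

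Concretely, assume $(R,r,t)\sat\dDid_j(\sfa_j)$. By the interpretation of $\dDid_j(\sfa_j)$, there is a time $t''\le t$ with $(R,r,t'')\sat\ddo_j(\sfa_j)$. Now I invoke the ordering hypothesis: since $\langle\sfa_1,\ldots,\sfa_k\rangle$ is ordered in~$R$, the fact $\dDid_{j-1}(\sfa_{j-1})$ is a necessary condition for $\ddo_j(\sfa_j)$, so from $(R,r,t'')\sat\ddo_j(\sfa_j)$ we get $(R,r,t'')\sat\dDid_{j-1}(\sfa_{j-1})$. Unfolding this latter $\dDid$ once more yields a time $t'\le t''$ with $(R,r,t')\sat\ddo_{j-1}(\sfa_{j-1})$.

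The step that closes the argument is the observation that $\dDid$ is monotone in time along a fixed run: the existential witness $t'$ that I obtained relative to the intermediate time $t''$ is still a legitimate witness relative to the later time $t$, because $t'\le t''\le t$ gives $t'\le t$. Hence $(R,r,t)\sat\dDid_{j-1}(\sfa_{j-1})$, which is exactly the consequent at $(r,t)$. Since $(r,t)$ was an arbitrary point satisfying the antecedent and $j$ was arbitrary in the stated range, this establishes $R\sat\big(\dDid_j(\sfa_j)\imp\dDid_{j-1}(\sfa_{j-1})\big)$ for all $2\le j\le k$.

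I do not expect any real obstacle here; the content of the claim is essentially that ``happened by time $t$'' is preserved when we chain through an intermediate earlier time. The only point that requires care is bookkeeping of the three time indices $t'\le t''\le t$ and being explicit that the ordering hypothesis applies at the \emph{intermediate} point $(r,t'')$ where $\ddo_j(\sfa_j)$ actually holds, rather than at $(r,t)$ itself (where only $\dDid_j(\sfa_j)$ is assumed). I would flag explicitly that the proof uses nothing about $(r',t')$-style indistinguishability and no appeal to \cref{thm:kop-1}: it is a purely temporal consequence of the definitions, holding in every system~$R$.
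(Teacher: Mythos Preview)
Your proof is correct and matches the paper's argument essentially step for step: unfold $\dDid_j(\sfa_j)$ to a time where $\ddo_j(\sfa_j)$ holds, apply the ordering hypothesis there, and then push $\dDid_{j-1}(\sfa_{j-1})$ forward to time~$t$. The only cosmetic difference is that the paper phrases the last step as ``$\dDid_{j-1}(\sfa_{j-1})$ is stable'' rather than explicitly re-unfolding and re-folding $\dDid$ via a second witness~$t'$, but the content is identical.
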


\begin{proof}
Assume that $(R,r,t)\sat\dDid_j(\sfa_j)$. Then, by  definition of~$\dDid_j(\sfa_j)$,   we have $(R,r,\hat{t}\,)\sat \ddo_j(\sfa_j)$ for some $\hat{t}\le t$. 
The fact that $\langle \sfa_1,\ldots,\sfa_k\rangle$ is ordered in~$R$ implies that $\dDid_{j-1}(\sfa_{j-1})$ is a necessary condition for
$\ddo_j(\sfa_j)$ in the system~$R$, and so $(R,r,\hat{t}\,)\sat \dDid_{j-1}(\sfa_{j-1})$. 
Since $\dDid_{j-1}(\sfa_{j-1})$ is a stable fact  and $t\ge\hat{t}$, we obtain that  
$(R,r,t)\sat \dDid_{j-1}(\sfa_{j-1})$. The claim follows.
\end{proof}

We say that a fact $\varphi$ is \defemph{stable in~$\bm{R}$} if once true, $\varphi$
remains true. Formally,  if ~$(R,r,t)\sat\varphi$ and $t'>t$
then \mbox{$(R,r,t')\sat \varphi$}, for all $r\in R$ and $t,t'\ge 0$.
Notice that while $\ddo_i(\sfa)$ is, in general, not a stable fact, $\dDid_i(\sfa)$ is always stable. 

\begin{definition} 
\label{def:selective}
We say that \defemph{agent~$\bm{i}$ recalls $\bm{\psi}$ in}~$\bm{R}$ if the fact $K_i\psi$ is stable in~$R$. 
\end{definition}
The notion of {\em perfect recall}, capturing the assumption that agents remember all events that they take part in, is popular in the analysis of games and multi-agent systems~%
\cite{FHMV,Selten75}. 
While perfect recall is a nontrivial assumption often requiring significant storage costs, selective recall of single facts such as $\ddo_j(\sfa_j)$ is a much weaker assumption, that can be assumed of a system~$R$ essentially  without loss of generality. By adding a single bit to Agent~$j$'s local state, whose value is 0 as long as~$j$ has not performed $\sfa_j$ and~1 once the action has been performed, we can obtain a system~$R'$ that is isomorphic to~$R$, in which 
Agent~$j$ recalls $\ddo_j(\sfa_j)$. 

\begin{claim}
\label{claim2}
Assume that $\sfa_j$ is a conscious action for~$j$ in~$R$, 
and that $j$ recalls $\dDid_j(\sfa_j)$ in~$R$. Then $\dDid_j(\sfa_j)$
is a $j$-local fact in~$R$.
\end{claim}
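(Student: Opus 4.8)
The plan is to prove the $j$-locality of $\dDid_j(\sfa_j)$ directly from its definition, that is, to show $R\sat\big(\dDid_j(\sfa_j)\imp K_j\dDid_j(\sfa_j)\big)$. So I would fix an arbitrary point $(r,t)\in\Pts(R)$ at which $\dDid_j(\sfa_j)$ holds and aim to derive $(R,r,t)\sat K_j\dDid_j(\sfa_j)$.

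First I would unfold the definition of $\dDid_j(\sfa_j)$ to extract a time $\hat{t}\le t$ with $(R,r,\hat{t}\,)\sat\ddo_j(\sfa_j)$; this is the instant at which the action is actually performed. Next I would invoke the hypothesis that $\sfa_j$ is a conscious action for~$j$: by the remark preceding Theorem~\ref{thm:kop-1}, $(R,r,\hat{t}\,)\sat\ddo_j(\sfa_j)$ is equivalent to $(R,r,\hat{t}\,)\sat K_j\ddo_j(\sfa_j)$, so at the moment of acting agent~$j$ already \emph{knows} that it is performing~$\sfa_j$. Since $t'=t$ is permitted in the definition of $\dDid$, the implication $\ddo_j(\sfa_j)\imp\dDid_j(\sfa_j)$ is valid in~$R$; applying the operator~$K_j$ to this valid implication, using the standard monotonicity of the S5 operator~$K_j$ (if $\varphi\imp\psi$ is valid in~$R$ then so is $K_j\varphi\imp K_j\psi$), then yields $(R,r,\hat{t}\,)\sat K_j\dDid_j(\sfa_j)$.

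The last step, which carries the real content of the hypothesis, is to transport this knowledge forward from time~$\hat{t}$ to the present time~$t$. Being a conscious action only guarantees knowledge \emph{at} the moment of acting; it says nothing about later times, so without a further assumption agent~$j$ might ``forget'' that it acted. This is exactly where the assumption that $j$ recalls $\dDid_j(\sfa_j)$ enters: by Definition~\ref{def:selective} it says that $K_j\dDid_j(\sfa_j)$ is stable in~$R$. Since $(R,r,\hat{t}\,)\sat K_j\dDid_j(\sfa_j)$ and $t\ge\hat{t}$, stability gives $(R,r,t)\sat K_j\dDid_j(\sfa_j)$ (trivially when $t=\hat{t}$, and by stability when $t>\hat{t}$). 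As $(r,t)$ was arbitrary, this establishes $R\sat\big(\dDid_j(\sfa_j)\imp K_j\dDid_j(\sfa_j)\big)$, i.e.\ $\dDid_j(\sfa_j)$ is $j$-local in~$R$.

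I would not expect a genuine technical obstacle: the argument is a short chain combining the three hypotheses---conscious action, the built-in validity of $\ddo_j(\sfa_j)\imp\dDid_j(\sfa_j)$, and recall. The one point that requires care is conceptual rather than computational, namely recognizing that the conscious-action assumption alone is insufficient, since it certifies knowledge only at the firing instant~$\hat{t}$; the recall assumption is precisely what bridges the temporal gap between~$\hat{t}$ and the observation time~$t$.
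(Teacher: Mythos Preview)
Your proposal is correct and follows essentially the same route as the paper's proof: extract $\hat t\le t$ with $\ddo_j(\sfa_j)$ at $(r,\hat t)$, use consciousness of $\sfa_j$ to get $K_j\dDid_j(\sfa_j)$ at $(r,\hat t)$, and then use recall (stability of $K_j\dDid_j(\sfa_j)$) to carry this forward to time~$t$. The only cosmetic difference is that the paper unfolds the semantics of $K_j$ explicitly by picking an arbitrary $j$-indistinguishable point, whereas you invoke the remark before Theorem~\ref{thm:kop-1} and S5 monotonicity; these are interchangeable.
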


\begin{proof}
Suppose that $(R,r,t)\sat \dDid_j(\sfa_j)$. Then, by definition of $\dDid_j(\sfa_j)$, we have  $(R,r,\hat{t}\,)\sat \ddo_j(\sfa_j)$ for some 
$\hat{t}\le t$.
Choose an arbitrary  $(r',t')\in \Pts(R)$ satisfying that
$(r',t')\rlsim{j}(r,\hat{t}\,)$. It follows that $(R,r',t')\sat\ddo_j(\sfa_j)$ since
$\sfa_j$ is a conscious action for~$j$ in~$R$. 
By definition of~$\dDid_j(\sfa_j)$ it follows that
$(R,r',t')\sat\dDid_j(\sfa_j)$.
Now, by definition of $\sat$ for $K_j$  we have that $(R,r,\hat{t}\,)\sat
K_j\dDid_j(\sfa_j)$. By assumption, $j$ recalls $\dDid_j(\sfa_j)$ in~$R$, and so $K_j\dDid_j(\sfa_j)$ is stable in~$R$. Thus, since $t\ge\hat{t}$, we obtain 
that $(R,r,t)\sat K_j\dDid_j(\sfa_j)$, as claimed. 
\end{proof}

We can now show: 

\begin{theorem}[Ordering and Nested Knowledge]
\label{thm:nested}
\label{thm:nkop}
 Assume that 
\begin{itemize}
\item 
the actions $\langle \sfa_1,\ldots,\sfa_k\rangle$ are \emph{ordered} in~$R$,
\item each agent~$j=1,\ldots,k$ recalls $\,\dDid_j(\sfa_j)$ in~$R$, 
\item $\sfa_j$ is a conscious 
action for~$j$ in~$R$, for all $j=1,\ldots,k$, and 
\item $\psi$ is a $\,$\underline{\emph{stable}}$\,$ 
necessary condition for the first action~$\ddo_1(\sfa_1)$ in~$R$ 
\end{itemize}
Then  $\,K_jK_{j-1}\cdots K_1\psi\,$ is a necessary condition for 
the~$j^{\,\mathrm{th}}$ action $\ddo_j(\sfa_j)$ in~$R$,  for all $j\le k$.
\end{theorem}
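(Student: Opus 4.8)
The plan is to prove, by induction on $j$, a slightly \emph{stronger} statement: that the fact $\varphi_j$ defined by $\varphi_0\eqdef\psi$ and $\varphi_j\eqdef K_j\varphi_{j-1}$ (so that $\varphi_j=K_jK_{j-1}\cdots K_1\psi$) is a necessary condition for the stable fact $\dDid_j(\sfa_j)$, rather than merely for $\ddo_j(\sfa_j)$. Since $\ddo_j(\sfa_j)\imp\dDid_j(\sfa_j)$ is valid in $R$ (we allow $t'=t$ in the definition of $\dDid$), this stronger claim immediately yields the theorem. Passing to $\dDid_j$ is the crucial move: the nested-knowledge facts $\varphi_{j-1}$ need not be stable, so an argument phrased directly in terms of the transient fact $\ddo_j(\sfa_j)$ cannot carry the required knowledge across the time gap between the occurrence of $\sfa_{j-1}$ and that of $\sfa_j$; working instead with the stable, $j$-local fact $\dDid_j(\sfa_j)$ supplied by Claim~\ref{claim2} repairs this.

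The engine of the induction is the observation that the proof of Theorem~\ref{thm:kop-1} uses only the \emph{locality} of the action: it invokes $\ddo_i(\sfa)\imp K_i\ddo_i(\sfa)$ and nothing else about $\sfa$. Hence the identical argument establishes, for any $i$-local fact $\eta$ in $R$, that if $\vartheta$ is a necessary condition for $\eta$ then $K_i\vartheta$ is a necessary condition for $\eta$. By Claim~\ref{claim2}, each $\dDid_j(\sfa_j)$ is precisely such a $j$-local fact (using that $\sfa_j$ is conscious and that $j$ recalls $\dDid_j(\sfa_j)$), so this localized form of the \KoP\ is available at every level of the induction.

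For the base case $j=1$ I would first show that $\psi$ itself is a necessary condition for $\dDid_1(\sfa_1)$: if $(R,r,t)\sat\dDid_1(\sfa_1)$ then $(R,r,\hat t\,)\sat\ddo_1(\sfa_1)$ for some $\hat t\le t$, whence $(R,r,\hat t\,)\sat\psi$ since $\psi$ is a necessary condition for $\ddo_1(\sfa_1)$, and then $(R,r,t)\sat\psi$ by stability of $\psi$. Applying the localized \KoP\ to the $1$-local fact $\dDid_1(\sfa_1)$ then gives that $\varphi_1=K_1\psi$ is a necessary condition for $\dDid_1(\sfa_1)$. For the inductive step, assume $\varphi_{j-1}$ is a necessary condition for $\dDid_{j-1}(\sfa_{j-1})$. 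Claim~\ref{claim1} gives $R\sat\big(\dDid_j(\sfa_j)\imp\dDid_{j-1}(\sfa_{j-1})\big)$, so chaining this with the inductive hypothesis shows that $\varphi_{j-1}$ is a necessary condition for $\dDid_j(\sfa_j)$. Applying the localized \KoP\ to the $j$-local fact $\dDid_j(\sfa_j)$ then yields that $\varphi_j=K_j\varphi_{j-1}$ is a necessary condition for $\dDid_j(\sfa_j)$, closing the induction.

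I expect the main obstacle to be conceptual rather than technical: spotting that one should strengthen the target from $\ddo_j(\sfa_j)$ to the stable fact $\dDid_j(\sfa_j)$, and correspondingly apply the \KoP\ to $\dDid_j(\sfa_j)$ (justified as a conscious/local fact by Claim~\ref{claim2}) instead of to $\sfa_j$ directly. Once this reformulation is in place, Claims~\ref{claim1} and~\ref{claim2} together with the locality-only reading of Theorem~\ref{thm:kop-1} combine routinely; the only points needing care are the use of stability of $\psi$ in the base step and the remark that $\ddo_j(\sfa_j)\imp\dDid_j(\sfa_j)$ lets the strengthened claim imply the statement as written.
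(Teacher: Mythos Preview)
Your proposal is correct and follows essentially the same approach as the paper: both strengthen the inductive target from $\ddo_j(\sfa_j)$ to the stable fact $\dDid_j(\sfa_j)$, then use Claim~\ref{claim2} for $j$-locality and Claim~\ref{claim1} for the chaining in the inductive step. The only difference is presentational: you abstract the ``localized \KoP'' (any $i$-local fact inherits the conclusion of Theorem~\ref{thm:kop-1}) as a reusable lemma, whereas the paper unpacks that same argument inline at each step of the induction.
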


\begin{proof}
Assuming the conditions of the theorem, we will prove
by induction on~$j\le k$ that 
$\dDid_j(\sfa_j)$ validly implies $K_jK_{j-1}\cdots K_1\psi$ in~$R$. 
Since $\ddo_j(\sfa_j)$ validly implies $\dDid_j(\sfa_j)$ by definition of $\dDid_j(\sfa_j)$, this will yield that 
$K_jK_{j-1}\cdots K_1\psi$ is a necessary condition for $\ddo_j(\sfa_j)$ in~$R$, 
as claimed. 
We proceed with the inductive argument.

\begin{itemize}
\item Base case $j=1$:\quad
Assume that $(R,r,t)\sat \dDid_1(\sfa_1)$. Claim~\ref{claim2}  implies
that $(R,r,t)\sat K_1\dDid_1(\sfa_1)$. Let  $(r',t')\in \Pts(R)$ be an arbitrary point satisfying  that 
$(r',t')\rlsim{1}(r,t)$. Then $(R,r',t')\sat \dDid_1(\sfa_1)$ by the knowledge property.
Thus, $(R,r',\hat{t}\,)\sat \ddo_1(\sfa_1)$ holds for some $\hat{t}\le t'$, and because $\psi$ is a necessary condition
for~$\ddo_1(\sfa_1)$ in~$R$, we obtain that $(R,r,\hat{t}\,)\sat\psi$. Since~$\psi$ is
stable and $t'\ge \hat{t}$, we have that \mbox{$(R,r',t')\sat\psi$}. By
choice of $(r',t')$ we have that $(R,r,t)\sat K_1\psi$, as claimed. 

\item Inductive step: Let $j>1$ and assume that 
$K_{j-1}\cdots K_1\psi$ is a necessary condition  for $\dDid_{j-1}(\sfa_{j-1})$ in~$R$. Moreover, let  $(R,r,t)\sat\dDid_j(\sfa_j)$. 
Since  $\sfa_j$ is a conscious action for~$j$, Claim~\ref{claim2} implies  that $(R,r,t)\sat K_j\dDid_j(\sfa_j)$.
Choose an arbitrary  $(r',t')\in \Pts(R)$ satisfying that 
$(r',t')\rlsim{j}(r,t)$. 
By definition of~$K_j$, it follows that $(R,r',t')\sat\dDid_j(\sfa_j)$. 
By Claim~\ref{claim1},  since the sequence $\langle \sfa_1,\ldots,\sfa_k\rangle$ is ordered in~$R$ and $j>1$ we have that $(R,r',t')\sat\dDid_{j-1}(\sfa_{j-1})$.
We now apply the inductive hypothesis to obtain that 
$(R,r',t')\sat K_{j-1}\cdots K_1\psi$. 
Finally, we obtain 
that $(R,r,t)\sat K_jK_{j-1}\cdots K_1\psi$ 
by choice of~$(r',t')$ and the definition of `$\sat$'  for~$K_j$.
The claim now follows.
\end{itemize}
\end{proof}

A slightly more restricted version of Theorem~\ref{thm:nested} was proved in~\cite{BZMacm}. Rather than consider an arbitrary necessary condition for $\alpha_1$, they proved a version for the case in which the first action~$\sfa_1$ is triggered by an external input to agent~1. Technically, the proofs are quite similar. 

Theorem~\ref{thm:nested} provides a necessary, but possibly not sufficient, condition for 
ordering actions in distributed systems. 
If agent~$j$ acts {strictly later} than when  $K_jK_{j-1}\cdots K_1\psi$ first holds, then it may be inappropriate for agent~$j+1$ to act when it knows 
that the fact $K_jK_{j-1}\cdots K_1\psi$ holds (i.e., when $K_{j+1}K_j\cdots K_1\psi$ first holds). Nevertheless, Theorem~\ref{thm:nested} is often very useful because it can be used as a guide for efficiently, and sometimes even optimally, performing a sequence of ordered actions. 
Intuitively, suppose that we have a protocol whose goal is to perform $\langle \sfa_1,\ldots,\sfa_k\rangle$ in response to an externally generated  trigger~$\psi$ (such as the `$\go$' message in Firing Squad). In particular, assume that~$\psi$ is a necessary condition for~$\sfa_1$. Keeping the communication aspects of this protocol fixed, an optimally fast solution would be for each agent~$j\le k$ to perform $\sfa_j$ when $K_jK_{j-1}\cdots K_1\psi$ first holds.
Let $R$ be the set of runs of such a protocol with $r\in~R$, and let 
$t_j$ and $t_{j-1}$ be the earliest times at which $(R,r,t_j)\sat K_jK_{j-1}\cdots K_1\psi$ and $(R,r,t_{j-1})\sat K_{j-1}\cdots K_1\psi$ hold in a run~$r$, respectively. 
The knowledge property guarantees that $K_jK_{j-1}\cdots K_1\psi$ validly implies that $K_{j-1}\cdots K_1\psi$ in~$R$, and so $t_j\ge t_{j-1}$. 
Since, by assumption, $\sfa_j$ is performed at time~$t_j$ and $\sfa_{j-1}$ at $t_{j-1}$, we have that agents perform actions in linear temporal order, as required by Definition~\ref{def:ordered}. Clearly, none of the actions can be performed any earlier, as Theorem~\ref{thm:nested} shows. We conclude that in time-efficient protocols, the nested knowledge formula presented by the theorem can be  both necessary and sufficient. In this sense, Theorem~\ref{thm:nested} suggests a recipe for obtaining time-efficient solutions for ordering actions.

Just as \cref{thm:ckop} implies that common knowledge is a necessary condition for simultaneous actions, we now have by \cref{thm:nkop} that nested knowledge is a necessary condition for performing actions in linear temporal order. And just as there is an established literature on when common knowledge is and is not attainable and on how it may arise, there are results concerning the communication structure that underlies attaining nested knowledge. Indeed, in a seminal paper~\cite{ChM}, Chandy and Misra showed that in asynchronous systems~$R$, if $(R,r,t)\sat\neg\varphi$ 
and at a time \mbox{$t'>t$}
$(R,r,t')\sat K_jK_{j-1}\cdots K_1\varphi$, 
then there must be a message chain in the run~$r$ between times~$t$ and~$t'$, passing through the agents~1,2,\ldots,$j$ in this order (possibly involving additional agents as well). Given \cref{thm:nkop}, this implies that the only way to coordinate actions in a linear temporal order in an asynchronous setting is by way of such message chains.%
\footnote{\cref{thm:kop-1,thm:nkop} depend on conscious actions and therefore do not apply to asynchronous systems. Nevertheless, variants of these theorems can be presented that do apply to asynchronous systems and nondeterministic protocols. Details will appear in~\cite{Mono}.}

More recently, Ben Zvi and Moses extended Chandy and Misra's work to systems in which communication is not asynchronous, but rather agents may have access to clocks and the transmission time for each of the channels is bounded \cite{BZMacm}. They show that a communication structure called a {\em centipede} must be constructed in order to obtain nested knowledge of spontaneous facts such as the arrival of an external input. 
They prove a slightly more restricted instance of \cref{thm:nkop} (without using \KoP\ directly), and use it to show that ordering actions in their setting requires the construction of the appropriate centipedes. 
Finally, Parikh and Krasucki analyze the ability to create {\em levels of knowledge} consisting of collections of nested knowledge formulas in~\cite{PK92}. \cref{thm:nkop} relates levels of knowledge to coordination.

\section{Discussion}
\label{sec:discussion}
This paper formulated the knowledge of preconditions principle and presented three theorems relating knowledge and coordinated action: the first is the \KoP\  itself---necessary conditions for an action must be {\em known} to hold when the action is performed. Next, we showed that necessary conditions for simultaneous actions must be commonly known when the actions are taken. Finally, nested knowledge is a necessary condition for coordinating linearly ordered actions. 
The latter two are fairly direct consequences of the \KoP.  We discussed some of the uses of the latter two results in Sections~\ref{sec:ck} and~\ref{sec:nested}. Indeed the \KoP\ has many further implications. 

In recent years, several works that make use of \KoP\ have appeared, citing the 
unpublished~\cite{Mono}. For example, Casta\~{n}eda, Gonczarowski and Moses used the \KoP\ to analyze the consensus problem~\cite{CGM}, in which agents need to agree on a binary value in a fault-prone system. They designed a protocol in two steps---applying the \KoP\ once to derive a rule by which, roughly, agents decide on~0 when they know of an initial value of~0. Then, they applied the \KoP\ again assuming that this is the rule used for making decisions on~0, and obtained a rule involving nested knowledge (roughly, a statement of the form  ``knowing that nobody knows~0'')  for deciding on the value~1. The result of their analysis was a very efficient solution to consensus that is optimal in a strong sense: It is the first unbeatable consensus protocol. No protocol can strictly dominate it, by having processes always decide at least as fast, and sometimes strictly faster, than this protocol does. The work of~\cite{CGM} complements an earlier work by Halpern, Moses and Waarts~\cite{HalMoWa2001}, in which a fixed point analysis of optimal consensus was obtained. The latter, too, is closely related to the \KoP.
 
 Gonczarowski and Moses used the \KoP\ to analyze the epistemic requirements of more general forms of coordination~\cite{GoM}. Namely, they considered a setting in which~$k$ agents need to perform actions, and there are time bounds on the relative times at which the actions of any pair of agents is performed. 
The simple instance in which all bounds are~0 is precisely that of the  simultaneous actions considered in \cref{sec:ck}. They show that such coordination requires vectorial fixed points of knowledge conditions, which are naturally related to fixed points and equilibria. 
The papers \cite{BzMICLA2013,BzMTARK2013,BZMacm,GoM} together can all be viewed as making use of the \KoP\ to provide insights into the interaction between time and communication for coordinating actions in a distributed and multi-agent system. Describing them is beyond the scope of the current paper. 

The most significant aspect of the \KoP, in our view, is the fact that it places a new emphasis on the epistemic aspects of problem solving in a multi-agent system. Simple necessary conditions induce epistemic conditions. Thus, in order to act correctly, one needs a mechanism ensuring that the agents obtain the necessary knowledge, and that they discover that they have this knowledge. Most problems and solutions are not posed or described in this fashion. We believe that the \KoP\ encapsulates an important connection between knowledge, action and coordination that will find many applications in the future. 
\bibliographystyle{eptcs}
\bibliography{newtkop,newz1}
\end{document}